%2multibyte Version: 5.50.0.2960 CodePage: 65001

\documentclass[11pt,a4paper]{article}
%%%%%%%%%%%%%%%%%%%%%%%%%%%%%%%%%%%%%%%%%%%%%%%%%%%%%%%%%%%%%%%%%%%%%%%%%%%%%%%%%%%%%%%%%%%%%%%%%%%%%%%%%%%%%%%%%%%%%%%%%%%%%%%%%%%%%%%%%%%%%%%%%%%%%%%%%%%%%%%%%%%%%%%%%%%%%%%%%%%%%%%%%%%%%%%%%%%%%%%%%%%%%%%%%%%%%%%%%%%%%%%%%%%%%%%%%%%%%%%%%%%%%%%%%%%%
\usepackage{amsfonts}
\usepackage{amsmath}
\usepackage{amssymb}
\usepackage{graphicx}

\setcounter{MaxMatrixCols}{10}
%TCIDATA{OutputFilter=LATEX.DLL}
%TCIDATA{Version=5.50.0.2960}
%TCIDATA{Codepage=65001}
%TCIDATA{<META NAME="SaveForMode" CONTENT="1">}
%TCIDATA{BibliographyScheme=Manual}
%TCIDATA{Created=Sunday, February 12, 2012 16:45:23}
%TCIDATA{LastRevised=Saturday, March 24, 2018 16:36:28}
%TCIDATA{<META NAME="GraphicsSave" CONTENT="32">}
%TCIDATA{<META NAME="DocumentShell" CONTENT="Articles\SW\JEEP -  A General Purpose Vehicle">}
%TCIDATA{Language=American English}
%TCIDATA{CSTFile=LaTeX article (bright).cst}

\newtheorem{acknowledgement}{Acknowledgement}

\newtheorem{definition}{Definition}

\newtheorem{proposition}{Proposition}

\newenvironment{proof}[1][Proof]{\textbf{#1.} }{\  \rule{0.5em}{0.5em}}
\makeatletter
\def \@removefromreset#1#2{\let \@tempb \@elt
     \def \@tempa#1{@&#1}\expandafter \let \csname @*#1*\endcsname \@tempa
     \def \@elt##1{\expandafter \ifx \csname @*##1*\endcsname \@tempa \else
    \noexpand \@elt{##1}\fi}     \expandafter \edef \csname cl@#2\endcsname{\csname cl@#2\endcsname}     \let \@elt \@tempb
     \expandafter \let \csname @*#1*\endcsname \@undefined}

\@removefromreset{equation}{section}

\@removefromreset{theorem}{section}
\makeatother
\input{tcilatex}
\begin{document}

\title{Quantifying tolerance of a nonlocal \\
multi-qudit state to any local noise}
\author{Elena R. Loubenets \\
%EndAName
Applied Mathematics Department, \\
National Research University Higher School of Economics, \\
Moscow, 101000, Russia}
\maketitle

\begin{abstract}
We present a general approach for quantifying tolerance of a nonlocal $N$%
-partite state to any local noise under different classes of quantum
correlation scenarios with arbitrary numbers of settings and outcomes at
each site. This allows us to derive new precise bounds in $d$ and $N$ on
noise tolerances for: (i) an arbitrary nonlocal $N$-qudit state; (ii) the $N$%
-qudit Greenberger-Horne-Zeilinger (GHZ) state; (iii) the $N$-qubit $W$
state and the $N$-qubit Dicke states, and to analyse asymptotics of these
precise bounds for large $N$ and $d.$
\end{abstract}

\section{Introduction}

Nonlocality \cite{1, 01, 02} of an $N$-qudit quantum state, \emph{in the
sense} \emph{of its} \emph{violation of a Bell inequality, }is a major
resource for developing quantum information technologies. Conceptual and
quantitative issues of Bell nonlocality in a general nonsignaling case have
been analyzed in \cite{3} and references therein. The main concepts and
tools which were developed to describe and to study Bell nonlocality in a
quantum case have been reviewed in \cite{2}. (We further discuss only the
notions of Bell nonlocality and locality and, therefore, mostly suppress the
specification "Bell" before these terms.)

In quantum information applications, one, however, deals with noisy channels
and, for a nonlocal $N$-qudit state $\rho _{d,N}$ $,$ $d\geq 2,N\geq 2,$ it
is important to evaluate amounts of noise not breaking the nonclassical
character of its statistical correlations. Analytical and numerical bounds
on the critical visibility\footnote{%
For the rigorous definition of this notion, see Section 4.} of a nonlocal $N$%
-qudit state $\rho _{d,N}$ in a mixture \emph{with white noise:} 
\begin{equation}
\left( 1-\beta \right) \frac{\mathbb{I}^{\otimes N}}{d^{N}}+\beta \rho
_{d,N},\text{ \ \ }\beta \in \lbrack 0,1],  \label{1}
\end{equation}%
have been intensively studied in the literature: (i) for a nonlocal
two-qudit state -- in \cite{4, 5, 6, 7, 8} and references therein; (ii) for
some specific quantum correlation scenarios and specific $N$-qubit states --
in \cite{9, 10, 11, 12, 13, 14, 15, 16, 17}, and (iii) for an arbitrary
nonlocal $N$-qudit state $\rho _{d,N},$ $N\geq 3,d\geq 3$ -- in \cite{18}.

However, precise analytical bounds on the critical visibility of a nonlocal $%
N$-qudit state $\rho _{d,N}$ in a mixture \ 
\begin{equation}
\left( 1-\beta \right) \zeta _{loc}+\beta \rho _{d,N},\text{ \ \ }\beta \in
\lbrack 0,1],  \label{2}
\end{equation}%
with an arbitrary local noise\footnote{%
That is, a noise described by a local $N$-qudit state $\zeta _{loc}.$} and,
more generally, bounds on the \emph{tolerance}\footnote{%
For the rigorous definition of this notion, see Section 4.} of a nonlocal $N$%
-qudit state $\rho _{d,N}$ to any local noise are not, to our knowledge,
known in a general $N$-qudit case, though for a nonlocal family of joint
probabilities under a bipartite ($N=2$) correlation scenario, the similar
concept -- the resistance to noise -- was introduced in \cite{19} and
further discussed in \cite{2}.

We note that, for many quantum information applications based on Bell
nonlocality, it is important to evaluate the maximal amount of noise
tolerable by a nonlocal $N$-qudit state and this amount is determined
specifically via the noise tolerance of a nonlocal state.

In the present paper, due to the general framework for Bell nonlocality
developed in \cite{20, 21, 3}, we present a consistent approach to
quantifying tolerance of a nonlocal $N$-partite quantum state to any local
noise under different classes of quantum correlation scenarios with
arbitrary numbers of settings and any spectral types of outcomes at each
site. This allows us:

\begin{itemize}
\item to specify via parameters of an $N$-partite state the general
analytical expressions for the noise tolerance of a nonlocal $N$-partite
state (i) under $S_{1}\times \cdots \times S_{N}$-setting quantum
correlation scenarios with any number of outcomes at each site and (ii)
under all quantum correlation scenarios with arbitrary numbers of settings
and outcomes per site;

\item to derive \emph{new} precise lower/upper bounds in $d$ and $N$ on the
noise tolerances and the maximal amounts of tolerable local noise for: (i)
an arbitrary nonlocal $N$-qudit state; (ii) the $N$-qudit GHZ state; (iii)
the $N$-qubit $W$ state and the $N$-qubit Dicke states and to analyse
asymptotics of these precise new bounds for large $N$ and $d.$
\end{itemize}

\section{General $N$-partite Bell inequalities}

Let us shortly recall the notion of a general multipartite Bell inequality 
\cite{23} with arbitrary numbers of settings and outcomes per site. For the
general framework on the probabilistic description of an arbitrary
multipartite correlation scenario with any number of settings and any
spectral type of outcomes at each site, see \cite{24}.

Consider a correlation scenario, where each $n$-th of $N$ parties performs $%
S_{n}\geq 1$ measurements with outcomes $\lambda _{n}\in \lbrack -1,1]$ and
every measurement at $n$-th site is specified by a positive integer $%
s_{n}=1,...,S_{n}$. For concreteness, we label an $S_{1}\times \cdots \times
S_{N}$-setting scenario by $\mathcal{E}_{S},$ where $S=S_{1}\times \cdots
\times S_{N}.$

For a correlation scenario $\mathcal{E}_{S}$, denote by $%
P_{(s_{1},...,s_{N})}^{(\mathcal{E}_{S})}$ the joint probability
distribution of outcomes $(\lambda _{1},\ldots ,\lambda _{N})\in \lbrack
-1,1]^{N}$ under an $N$-partite joint measurement induced by measurements $%
s_{1},...,s_{N}$ at the corresponding sites and by 
\begin{eqnarray}
\mathcal{B}_{\Phi _{S}}^{(\mathcal{E}_{S})}
&=&\sum_{s_{1},...,s_{_{N}}}\left\langle f_{(s_{1},...,s_{N})}(\lambda
_{1},\ldots ,\lambda _{N})\right\rangle _{\mathcal{E}_{S}},  \label{3} \\
\Phi _{S} &=&\{f_{(s_{1},...,s_{N})}:[-1,1]^{N}\rightarrow \mathbb{R}\mid
s_{n}=1,...,S_{n},\text{ \ }n=1,...,N\},  \notag
\end{eqnarray}%
\emph{a} \emph{linear combination} of averages (expectations)%
\begin{eqnarray}
&&\left\langle f_{(s_{1},...,s_{N})}(\lambda _{1},\ldots ,\lambda
_{N})\right\rangle _{\mathcal{E}_{S}}  \label{4} \\
&=&\int\limits_{[-1,1]^{N}}f_{(s_{1},...,s_{N})}(\lambda _{1},\ldots
,\lambda _{N})P_{(s_{1},...,s_{N})}^{(\mathcal{E}_{S})}\left( \mathrm{d}%
\lambda _{1}\times \cdots \times \mathrm{d}\lambda _{N}\right) ,  \notag
\end{eqnarray}%
of the most general form, specified for each $N$-partite joint measurement $%
(s_{1},...,s_{N})$ by a bounded real-valued function $f_{(s_{1},...,s_{N)}}$
of outcomes $\left( \lambda _{1},\ldots ,\lambda _{N}\right) $ $\in $ $%
[-1,1]^{N}$ at all $N$ sites. Each linear combination (\ref{3}) is specified
by a family $\Phi _{S}=\{f_{(s_{1},...,s_{N})}\}$ of these functions.

Depending on a choice of a function $f_{(s_{1},...,s_{N})}$, an average (\ref%
{4}) may refer either to the joint probability of events observed under this
joint measurement at $M\leq N$ sites or to the expectation 
\begin{equation}
{\big \langle}\lambda _{1}^{(s_{1})}\cdot \ldots \cdot \lambda
_{n_{M}}^{(s_{n_{M}})}{\big \rangle}_{\mathcal{E}_{S}}=\int%
\limits_{[-1,1]^{N}}\lambda _{1}\cdot \ldots \cdot \lambda
_{n_{M}}P_{(s_{1},...,s_{N})}^{(\mathcal{E}_{S})}\left( \mathrm{d}\lambda
_{1}\times \cdots \times \mathrm{d}\lambda _{N}\right)   \label{5}
\end{equation}%
of the product of outcomes observed at $M\leq N$ sites or may have a more
complicated form. In quantum information, the product expectation (\ref{5})
is referred to as a correlation function.

Let the probabilistic description of a correlation scenario $\mathcal{E}_{S}$
admit\footnote{%
For the general statements on the LHV\ modelling, see section 4 in \cite{24}.%
} \emph{a local hidden variable (LHV) model, }that is,\emph{\ }all joint
probability distributions $\left\{ P_{(s_{1},...,s_{N})}^{(\mathcal{E}%
_{S})},s_{n}=1,...,S_{n},n=1,...,N\right\} $ of this scenario admit the
representation%
\begin{eqnarray}
&&P_{(s_{1},...,s_{N})}^{(\mathcal{E}_{S})}\left( \mathrm{d}\lambda
_{1}\times \cdots \times \mathrm{d}\lambda _{N}\right)  \label{6} \\
&=&\dint\limits_{\Omega }P_{1,s_{1}}(\mathrm{d}\lambda _{1}|\omega )\cdot
\ldots \cdot P_{N,s_{_{N}}}(\mathrm{d}\lambda _{N}|\omega )\nu _{\mathcal{E}%
_{S}}(\mathrm{d}\omega )  \notag
\end{eqnarray}%
in terms of a single probability distribution $\nu _{\mathcal{E}_{S}}(%
\mathrm{d}\omega )$ of some variables $\omega \in \Omega $ and conditional
probability distributions $P_{n,s_{n}}(\mathrm{\cdot }|\omega ),$ referred
to as "local" in the sense that each $P_{n,s_{n}}(\mathrm{\cdot }|\omega )$
at $n$-th site depends only on the corresponding measurement $%
s_{n}=1,...,S_{n}$ at this site.

In this case, each linear combination (\ref{3}) of scenario averages
satisfies the tight LHV constraints \cite{23}: 
\begin{equation}
\mathcal{B}_{\Phi _{S}}^{\inf }\leq \mathcal{B}_{\Phi _{S}}^{(\mathcal{E}%
_{S})}{\big |}_{_{lhv}}\leq \mathcal{B}_{\Phi _{S}}^{\sup }  \label{7}
\end{equation}%
with the LHV constants%
\begin{eqnarray}
\mathcal{B}_{\Phi _{S}}^{\sup } &=&\sup_{\lambda _{n}^{(s_{n})}\in \lbrack
-1,1],\forall s_{n},\forall n}\text{ }%
\sum_{s_{1},...,s_{_{N}}}f_{(s_{1},...,s_{N})}(\lambda _{1}^{(s_{1})},\ldots
,\lambda _{N}^{(s_{N})}),  \label{8} \\
\mathcal{B}_{\Phi _{S}}^{\inf } &=&\inf_{\lambda _{n}^{(s_{n})}\in \lbrack
-1,1],\forall s_{n},\forall n}\text{ }%
\sum_{s_{1},...,s_{_{N}}}f_{(s_{1,}...,s_{N})}(\lambda _{1}^{(s_{1})},\ldots
,\lambda _{N}^{(s_{N})}).  \notag
\end{eqnarray}%
From (\ref{7}) it follows that, in the LHV\ case, 
\begin{equation}
\left\vert \text{ }\mathcal{B}_{\Phi _{S}}^{(\mathcal{E}_{S})}{\big |}%
_{_{lhv}}\right\vert \leq \mathcal{B}_{\Phi _{S}}^{lhv}=\max \left\{
\left\vert \mathcal{B}_{\Phi _{S}}^{\sup }\right\vert ,\left\vert \mathcal{B}%
_{\Phi _{S}}^{\inf }\right\vert \right\} .  \label{9}
\end{equation}%
Some of the LHV inequalities in (\ref{7}) may be fulfilled for a wider (than
LHV) class of correlation scenarios. This is, for example, the case for the
LHV\ constraints on joint probabilities following explicitly from
nonsignaling of probability distributions. Moreover, some of the LHV
inequalities in (\ref{7}) may be simply trivial, i. e. fulfilled for
correlation scenarios of all types, not necessarily nonsignaling. (For the
latter general concept and its relation to the EPR (Einstein-Podolsky-Rosen)
locality and Bell locality, see Sections 2, 3 in \cite{24}.)

\emph{Each of the tight LHV inequalities in (\ref{7}) that may be violated
under a non-LHV scenario is referred to as a (general) }$S_{1}\times \cdots
\times S_{N}$\emph{-setting} \emph{Bell inequality.}

\section{Quantum violation}

Let an $S_{1}\times \cdots \times S_{N}$-setting correlation scenario be
performed on a quantum state $\rho $ on a complex Hilbert space $\mathcal{H}%
_{1}\otimes \cdots \otimes \mathcal{H}_{N}$. For this correlation scenario,
every $N$-partite joint measurement $(s_{1},...,s_{N})$ is described by the
joint probability distribution 
\begin{equation}
\mathrm{tr}[\rho \{\mathrm{M}_{1,s_{1}}(\mathrm{d}\lambda _{1})\otimes
\cdots \otimes \mathrm{M}_{N,s_{N}}(\mathrm{d}\lambda _{N})\}],  \label{10}
\end{equation}%
where each $\mathrm{M}_{n,s_{n}}(\mathrm{\cdot })$ is a normalized positive
operator-valued (\emph{POV}) measure, representing on $\mathcal{H}_{n}$ a
generalized quantum measurement $s_{n}$ at $n$-th site. For concreteness, we
further specify this quantum correlation scenario by symbol $\mathcal{E}%
_{\rho ,\mathrm{M}_{S}}$, where $\mathrm{M}_{S}=\{\mathrm{M}_{n,s_{n}}\}$ is
a collection of POV measures at all $N$ sites.

Since the probabilistic description of a quantum correlation scenario does
not need \cite{1, 01, 02} to admit an LHV model, in a quantum case, Bell
inequalities may be violated. The parameter \cite{20}%
\begin{equation}
\mathrm{\Upsilon }_{S_{1}\times \cdots \times S_{N}}^{(\rho )}=\sup_{_{\Phi
_{S},\text{ }\mathrm{M}_{S}}}\frac{1}{\mathcal{B}_{\Phi _{S}}^{lhv}}%
\left\vert \mathcal{B}_{\Phi _{S}}^{(\mathcal{E}_{\rho ,\mathrm{M}%
_{S}})}\right\vert \geq 1  \label{11}
\end{equation}%
specifies the maximal violation by an $N$-partite quantum state $\rho $ of
all general $S_{1}\times \cdots \times S_{N}$-setting Bell inequalities for
any number of outcomes at each site and the parameter \cite{20} 
\begin{equation}
\mathrm{\Upsilon }_{\rho }=\sup_{S_{1},...,S_{N}}\mathrm{\Upsilon }%
_{S_{1}\times \cdots \times S_{N}}^{(\rho )}\geq 1  \label{12}
\end{equation}%
-- the maximal violation by an $N$-partite quantum state $\rho $ of all
general Bell inequalities for any numbers of settings and outcomes at each
site.

From (\ref{3}), (\ref{4}), (\ref{10})--(\ref{12}) it follows that, for any
convex mixture $\rho =\sum \gamma _{i}\rho _{i},$ $\gamma _{i}\geq 0,$ $%
\sum_{i}\gamma _{i}=1,$ 
\begin{eqnarray}
1 &\leq &\mathrm{\Upsilon }_{S_{1}\times \cdots \times S_{N}}^{(\rho )}\leq
\sum_{i}\gamma _{i}\mathrm{\Upsilon }_{S_{1}\times \cdots \times
S_{N}}^{(\rho _{i})},  \label{13} \\
1 &\leq &\mathrm{\Upsilon }_{\rho }\leq \sum_{i}\gamma _{i}\mathrm{\Upsilon }%
_{\rho _{i}}.  \label{14}
\end{eqnarray}

\begin{definition}
An $N$-partite quantum state $\rho $ is referred to as $S_{1}\times \cdots
\times S_{N}$-setting nonlocal \cite{18} if it violates an $S_{1}\times
\cdots \times S_{N}$-setting Bell inequality and overall nonlocal (or simply
nonlocal) if it violates any of Bell inequalities.
\end{definition}

Clearly, an $S_{1}\times \cdots \times S_{N}$-setting nonlocal state is
(overall) nonlocal\emph{\ }but not vice versa. From Definition 1, (\ref{11}%
), (\ref{12}) and Proposition 6 in \cite{20} it follows that an $N$-partite
quantum state $\rho $ is \cite{20, 18}:

\begin{itemize}
\item $S_{1}\times \cdots \times S_{N}$-setting nonlocal iff 
\begin{equation}
\mathrm{\Upsilon }_{S_{1}\times \cdots \times S_{N}}^{(\rho )}>1  \label{15}
\end{equation}%
and $S_{1}\times \cdots \times S_{N}$-setting local iff 
\begin{equation}
\mathrm{\Upsilon }_{S_{1}\times \cdots \times S_{N}}^{(\rho )}=1.
\label{15_1}
\end{equation}

\item (overall) nonlocal iff 
\begin{equation}
\mathrm{\Upsilon }_{\rho }>1  \label{16}
\end{equation}%
and fully Bell local \cite{18} iff 
\begin{equation}
\mathrm{\Upsilon }_{\rho }=1.  \label{16_1}
\end{equation}
\end{itemize}

For details and the one-to-one correspondence of relations (\ref{15_1}), (%
\ref{16_1}) to the LHV modelling of the corresponding quantum correlation
scenarios on an $N$-partite quantum state $\rho ,$ see Sections 5 and 6 in 
\cite{20}.

\section{Tolerance to any local noise}

Let $\rho $ be a nonlocal quantum state on $\mathcal{H}_{1}\otimes \cdots
\otimes \mathcal{H}_{N}$ and $S_{1},...,S_{N}$ be arbitrary numbers of
measurement settings at the corresponding parties' sites. Denote by 
\begin{equation}
\beta _{S_{1}\times \cdots \times S_{N}}^{(\rho )}(\zeta _{loc})\in (0,1]
\label{17}
\end{equation}%
\emph{the }$S_{1}\times \cdots \times S_{N}$\emph{-setting critical
visibility }of a nonlocal $N$-partite state $\rho $ in a convex mixture with
noise described by a local $N$-partite state\emph{\ }$\zeta _{loc}:$ 
\begin{equation}
\left( 1-\beta \right) \zeta _{loc}\text{ }+\beta \rho ,\text{ \ \ }\beta
\in \lbrack 0,1].  \label{18}
\end{equation}%
In terminology specified in Section 3, the threshold $\beta _{S_{1}\times
\cdots \times S_{N}}^{(\rho )}(\zeta _{loc})$ means that a noisy state (\ref%
{18}) is $S_{1}\times \cdots \times S_{N}$-setting nonlocal iff 
\begin{equation}
\beta \in {\large (}\beta _{S_{1}\times \cdots \times S_{N}}^{(\rho )}(\zeta
_{loc}),1{\large ]}
\end{equation}%
and $S_{1}\times \cdots \times S_{N}$-setting local iff 
\begin{equation}
\beta \in {\large [}0,\beta _{S_{1}\times \cdots \times S_{N}}^{(\rho
)}(\zeta _{loc}){\large ]}.
\end{equation}

If $\beta _{S_{1}\times \cdots \times S_{N}}^{(\rho )}(\zeta _{loc})=1,$
then a noisy state (\ref{18}) is $S_{1}\times \cdots \times S_{N}$-setting
local for all $\beta \in \lbrack 0,1].$ For $\beta =1,$ the latter implies
that though an $N$-partite state $\rho $ is (overall) nonlocal, it does not
violate any of $S_{1}\times \cdots \times S_{N}$-setting Bell inequalities,
i.e. this state $\rho $ is $S_{1}\times \cdots \times S_{N}$-setting local.

Let $\mathcal{L}_{S_{1}\times \cdots \times S_{N}}^{(nonloc)}$ be the set of
all $S_{1}\times \cdots \times S_{N}$-setting nonlocal $N$-partite states on 
$\mathcal{H}_{1}\otimes \cdots \otimes \mathcal{H}_{N}$ and $\mathcal{L}%
_{N}^{(nonloc)}\mathcal{\supset L}_{S_{1}\times \cdots \times
S_{N}}^{(nonloc)}$ -- the set of all (overall) nonlocal $N$-partite states.

\begin{definition}
For a nonlocal $N$-partite state $\rho \in \mathcal{L}_{N}^{(nonloc)},$ we
call 
\begin{equation}
\mathfrak{T}_{S_{1}\times \cdots \times S_{N}}^{(\rho )}=\sup_{\zeta
_{loc}}\beta _{S_{1}\times \cdots \times S_{N}}^{(\rho )}(\zeta _{loc})\in
(0,1]  \label{20}
\end{equation}%
the $S_{1}\times \cdots \times S_{N}$-setting tolerance to any local noise.
Otherwise expressed, 
\begin{equation}
\mathfrak{T}_{S_{1}\times \cdots \times S_{N}}^{(\rho )}=\inf \{\beta \in
\lbrack 0,1]\mid \left( 1-\beta \right) \zeta _{loc}\text{ }+\beta \rho \in 
\mathcal{L}_{S_{1}\times \cdots \times S_{N}}^{(nonloc)},\text{ }\forall
\zeta _{loc}\}.  \label{21}
\end{equation}
\end{definition}

Clearly, a noisy state (\ref{18}) is $S_{1}\times \cdots \times S_{N}$%
-setting nonlocal for \emph{any} local noise iff $\beta \in (\mathfrak{T}%
_{S_{1}\times \cdots \times S_{N}}^{(\rho )},1]$.

If $\mathfrak{T}_{S_{1}\times \cdots \times S_{N}}^{(\rho )}=1,$ then a
nonlocal $N$-partite state $\rho \in \mathcal{L}_{N}^{(nonloc)}$ is $%
S_{1}\times \cdots \times S_{N}$-setting local. Since, however, $\rho $ is
overall nonlocal, there exist numbers $\widetilde{S}_{1},...,\widetilde{S}%
_{N}$ of measurement settings at the corresponding sites for which this
state is $\widetilde{S}_{1},...,\widetilde{S}_{N}$-setting nonlocal. For
these settings, $\mathfrak{T}_{\widetilde{S}_{1}\times \cdots \times 
\widetilde{S}_{N}}^{(\rho )}<1.$

\begin{definition}
For a nonlocal $N$-partite state $\rho \in \mathcal{L}_{N}^{(nonloc)},$ we
call%
\begin{equation}
\mathfrak{T}_{\rho }=\inf_{S_{1},...,S_{N}}\mathfrak{T}_{S_{1}\times \cdots
\times S_{N}}^{(\rho )}\in (0,1)  \label{22}
\end{equation}%
the overall tolerance (or simply tolerance) to any local noise.
\end{definition}

This definition implies that, for all $\beta \in (\mathfrak{T}_{\rho },1]$,
a noisy state (\ref{18}) specified for a nonlocal state $\rho \in \mathcal{L}%
_{N}^{(nonloc)}$ is nonlocal for any local noise.\smallskip

\emph{The smaller is the value of the noise tolerance }$\mathfrak{T}_{\rho }$%
\emph{\ of a nonlocal N-partite state }$\rho $\emph{, the greater is the
maximal amount }%
\begin{equation}
\mathfrak{M}_{\rho }=1-\mathfrak{T}_{\rho }  \label{22'}
\end{equation}%
\emph{of a local noise of any type tolerable\footnote{%
In the sense that a noisy state (\ref{18}) specified for a nonlocal state $%
\rho $ is also nonlocal.} by this nonlocal state under all quantum
correlation scenarios and, therefore, the greater is the robustness of
nonlocality of a state }$\rho $ \emph{to any local noise.}

\begin{proposition}
Let $\rho $ be a nonlocal N-partite state and $S_{1},...,S_{N}$ -- arbitrary
numbers of measurement settings at the corresponding sites. The $S_{1}\times
\cdots \times S_{N}$-setting tolerance of a nonlocal state $\rho $ to any
local noise has the form 
\begin{equation}
\mathfrak{T}_{S_{1}\times \cdots \times S_{N}}^{(\rho )}=\frac{2}{1+\mathrm{%
\Upsilon }_{S_{1}\times \cdots \times S_{N}}^{(\rho )}}  \label{23}
\end{equation}%
and the overall noise tolerance of a nonlocal state $\rho $ is given by 
\begin{equation}
\mathfrak{T}_{\rho }=\inf_{S_{1},...,S_{N}}\mathfrak{T}_{S_{1}\times \cdots
\times S_{N}}^{(\rho )}=\frac{2}{1+\mathrm{\Upsilon }_{\rho }},  \label{24}
\end{equation}%
where $\mathrm{\Upsilon }_{S_{1}\times \cdots \times S_{N}}^{(\rho )}$ is
the maximal violation (\ref{11}) by a nonlocal state $\rho $ of all $%
S_{1}\times \cdots \times S_{N}$-setting general Bell inequalities and $%
\mathrm{\Upsilon }_{\rho }$ -- the maximal violation (\ref{12}) by a
nonlocal state $\rho $ of all general Bell inequalities.
\end{proposition}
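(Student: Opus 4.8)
Write $S=S_{1}\times \cdots \times S_{N}$, $\sigma _{\beta }=(1-\beta )\zeta _{loc}+\beta \rho $, and abbreviate $\mathcal{B}_{\Phi _{S}}^{(\mathcal{E}_{\tau ,\mathrm{M}_{S}})}$ as $\mathcal{B}_{\Phi _{S}}^{(\tau )}$, which by (\ref{10}), (\ref{4}) is affine in the state $\tau $. The plan is to establish (\ref{23}) by two matching one-sided estimates on $\mathfrak{T}_{S}^{(\rho )}=\sup _{\zeta _{loc}}\beta _{S}^{(\rho )}(\zeta _{loc})$ and then to read off (\ref{24}) from (\ref{12}) and Definition 3. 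A preliminary reduction I would make first: for any family $\Phi _{S}$, subtracting from each $f_{(s_{1},...,s_{N})}$ the constant $\tfrac{1}{2}\big(\mathcal{B}_{\Phi _{S}}^{\sup }+\mathcal{B}_{\Phi _{S}}^{\inf }\big)/(S_{1}\cdots S_{N})$ shifts $\mathcal{B}_{\Phi _{S}}^{\sup }$, $\mathcal{B}_{\Phi _{S}}^{\inf }$ and every average $\mathcal{B}_{\Phi _{S}}^{(\tau )}$ by the same constant, since the average of a constant function equals that constant. Hence one may assume the Bell functional is \emph{balanced}, $\mathcal{B}_{\Phi _{S}}^{\sup }=-\mathcal{B}_{\Phi _{S}}^{\inf }=\mathcal{B}_{\Phi _{S}}^{lhv}$; a short check on the signs of $\mathcal{B}_{\Phi _{S}}^{\sup },\mathcal{B}_{\Phi _{S}}^{\inf }$ shows that this balancing never decreases the ratio $|\mathcal{B}_{\Phi _{S}}^{(\rho )}|/\mathcal{B}_{\Phi _{S}}^{lhv}$ once it exceeds $1$, so the supremum in (\ref{11}) may be taken over balanced families only.

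For the upper bound $\mathfrak{T}_{S}^{(\rho )}\leq 2/(1+\mathrm{\Upsilon }_{S}^{(\rho )})$, which is trivial when $\mathrm{\Upsilon }_{S}^{(\rho )}=1$, I would argue as follows. Given any local state $\zeta _{loc}$ and any $\beta >2/(1+\mathrm{\Upsilon }_{S}^{(\rho )})$, choose $\varepsilon >0$ with $\beta >2/(1+\mathrm{\Upsilon }_{S}^{(\rho )}-\varepsilon )$ and a balanced $\Phi _{S},\mathrm{M}_{S}$ for which, after a sign change if needed, $\mathcal{B}_{\Phi _{S}}^{(\rho )}>(\mathrm{\Upsilon }_{S}^{(\rho )}-\varepsilon )\,\mathcal{B}_{\Phi _{S}}^{lhv}$. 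Since $\zeta _{loc}$ is local, (\ref{7}) gives $\mathcal{B}_{\Phi _{S}}^{(\zeta _{loc})}\geq \mathcal{B}_{\Phi _{S}}^{\inf }=-\mathcal{B}_{\Phi _{S}}^{lhv}$, whence
\[
\mathcal{B}_{\Phi _{S}}^{(\sigma _{\beta })}=(1-\beta )\mathcal{B}_{\Phi _{S}}^{(\zeta _{loc})}+\beta \mathcal{B}_{\Phi _{S}}^{(\rho )}\geq \mathcal{B}_{\Phi _{S}}^{lhv}\big[\beta (1+\mathrm{\Upsilon }_{S}^{(\rho )}-\varepsilon )-1\big]>\mathcal{B}_{\Phi _{S}}^{lhv}=\mathcal{B}_{\Phi _{S}}^{\sup }.
\]
So $\sigma _{\beta }$ violates this $S$-setting Bell inequality, i.e. $\sigma _{\beta }\in \mathcal{L}_{S}^{(nonloc)}$; since $\zeta _{loc}$ and $\beta $ were arbitrary, Definition 2 yields $\mathfrak{T}_{S}^{(\rho )}\leq 2/(1+\mathrm{\Upsilon }_{S}^{(\rho )})$.

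For the lower bound I would invoke the framework of \cite{20, 21, 3}, which characterizes $\mathrm{\Upsilon }_{S}^{(\rho )}$ as the least $r\geq 1$ such that $\rho =\tfrac{r+1}{2}\rho _{+}-\tfrac{r-1}{2}\rho _{-}$ for some $S$-setting local states $\rho _{\pm }$ (equivalently, via the least total-variation norm of a local quasi-hidden-variable representation of the quantum correlations of $\rho $); the bound $\mathrm{\Upsilon }_{S}^{(\rho )}\leq r$ for any such decomposition is immediate from the balanced form of (\ref{11}) and from (\ref{9}), while existence of near-optimal decompositions is the substantive input. Given such a decomposition with $r>\mathrm{\Upsilon }_{S}^{(\rho )}$, take the worst-case noise $\zeta _{loc}^{\ast }=\rho _{-}$: for every $\beta \in [0,2/(r+1)]$,
\[
(1-\beta )\zeta _{loc}^{\ast }+\beta \rho =\beta \tfrac{r+1}{2}\,\rho _{+}+\big(1-\beta \tfrac{r+1}{2}\big)\,\rho _{-}
\]
is a convex combination of the $S$-setting local states $\rho _{\pm }$, hence $S$-setting local by (\ref{13}); thus $\beta _{S}^{(\rho )}(\zeta _{loc}^{\ast })\geq 2/(r+1)$, so $\mathfrak{T}_{S}^{(\rho )}\geq 2/(r+1)$, and letting $r\downarrow \mathrm{\Upsilon }_{S}^{(\rho )}$ proves (\ref{23}). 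Finally, because $t\mapsto 2/(1+t)$ is continuous and strictly decreasing on $(0,\infty )$, Definition 3 combined with (\ref{23}) and (\ref{12}) gives $\mathfrak{T}_{\rho }=\inf _{S_{1},...,S_{N}}\,2/(1+\mathrm{\Upsilon }_{S_{1}\times \cdots \times S_{N}}^{(\rho )})=2/\big(1+\sup _{S_{1},...,S_{N}}\mathrm{\Upsilon }_{S_{1}\times \cdots \times S_{N}}^{(\rho )}\big)=2/(1+\mathrm{\Upsilon }_{\rho })$, which is (\ref{24}); finiteness of $\mathrm{\Upsilon }_{\rho }$ (hence $\mathfrak{T}_{\rho }\in (0,1)$) is guaranteed by the same framework.

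The affine/convexity manipulations and the sign case-check in the balancing step are routine; the step I expect to be the main obstacle is the lower bound, which rests on the duality statement that the maximal Bell-violation ratio of $\rho $ equals the reciprocal of the least weight with which $\rho $ must be re-mixed with a local state to return to the local set — that is, on having from \cite{20, 21, 3} signed local-state decompositions of $\rho $ whose negative weight approaches $(\mathrm{\Upsilon }_{S}^{(\rho )}-1)/2$.
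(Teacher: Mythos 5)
Your upper-bound half is sound and is essentially a contrapositive rearrangement of the paper's argument: the balancing reduction is legitimate (shifting each $f_{(s_{1},...,s_{N})}$ by a constant shifts $\mathcal{B}_{\Phi _{S}}^{\sup }$, $\mathcal{B}_{\Phi _{S}}^{\inf }$ and every average equally, and one checks the violation ratio does not decrease once it exceeds $1$), and feeding a near-optimal balanced functional into the mixture shows that every $\beta >2/(1+\mathrm{\Upsilon }_{S_{1}\times \cdots \times S_{N}}^{(\rho )})$ makes $(1-\beta )\zeta _{loc}+\beta \rho $ violate an $S_{1}\times \cdots \times S_{N}$-setting Bell inequality for every $\zeta _{loc}$; the paper reaches the same bound (\ref{28}) via the mixing inequality (\ref{25}). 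Your passage from (\ref{23}) to (\ref{24}) via monotonicity of $t\mapsto 2/(1+t)$ is also fine.

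The genuine gap is the lower bound (\ref{33}). You derive it from the claim, attributed to \cite{20, 21, 3}, that $\mathrm{\Upsilon }_{S_{1}\times \cdots \times S_{N}}^{(\rho )}$ equals the least $r\geq 1$ for which $\rho =\tfrac{r+1}{2}\rho _{+}-\tfrac{r-1}{2}\rho _{-}$ with $\rho _{\pm }$ genuine $S_{1}\times \cdots \times S_{N}$-setting local \emph{density operators}. No such exact characterization is proved in those references: they supply LqHV (signed measure/source operator) representations of the scenario joint distributions and \emph{upper} bounds on $\mathrm{\Upsilon }$, and a signed representation of distributions does not produce two local quantum states on $\mathcal{H}_{1}\otimes \cdots \otimes \mathcal{H}_{N}$ decomposing $\rho $ with near-optimal negative weight. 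Worse, the missing statement is not an auxiliary lemma but is \emph{equivalent} to the inequality you want: a decomposition with parameter $r$ is exactly the same datum as a local $\zeta =\rho _{-}$ whose mixture with $\rho $ at $\beta =2/(r+1)$ is local, and conversely any local mixture $\mu =(1-\beta )\zeta +\beta \rho $ yields $\rho =\tfrac{1}{\beta }\mu -\tfrac{1-\beta }{\beta }\zeta $; so invoking it as a citation is circular unless you prove it. A separation argument does not close the hole verbatim either: a Hermitian witness separating $\rho $ from the set $\{\tfrac{r+1}{2}\sigma _{+}-\tfrac{r-1}{2}\sigma _{-}\}$ would have to be realized as an $S_{1}\times \cdots \times S_{N}$-setting functional whose LHV constants (\ref{8}) -- suprema over abstract assignments $\lambda _{n}^{(s_{n})}\in \lbrack -1,1]$, not over local quantum averages for the fixed measurements -- match the separation thresholds, and that normalization mismatch is precisely the delicate point. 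The paper proves this half without any decomposition: assuming $\beta >\mathfrak{T}_{S_{1}\times \cdots \times S_{N}}^{(\rho )}$, it uses nonlocality of the mixture for \emph{every} local noise, takes the noise adversarially so that in (\ref{31}) $x\in \lbrack -1,1]$ works against $y$ with $|y|\leq \mathrm{\Upsilon }_{S_{1}\times \cdots \times S_{N}}^{(\rho )}$, and concludes $\beta >2/(1+\mathrm{\Upsilon }_{S_{1}\times \cdots \times S_{N}}^{(\rho )})$, whence (\ref{33}). Either adopt that route or supply a genuine proof of your pseudomixture characterization; as written, the substantive half of (\ref{23}) is assumed rather than proved.
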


\begin{proof}
From (\ref{3}), (\ref{4}), (\ref{11}) and linearity in $\rho $ of quantum
probability distributions (\ref{10}) it follows%
\begin{equation}
\beta \mathrm{\Upsilon }_{S_{1}\times \cdots \times S_{N}}^{(\rho )}\leq 
\mathrm{\Upsilon }_{S_{1}\times \cdots \times S_{N}}^{(\left( 1-\beta
\right) \zeta _{loc}\text{ }+\beta \rho )}+(1-\beta )\mathrm{\Upsilon }%
_{S_{1}\times \cdots \times S_{N}}^{(\zeta _{loc})}.\text{\ }  \label{25}
\end{equation}%
By Definition 2, for each $\beta \in \lbrack 0,\mathfrak{T}_{S_{1}\times
\cdots \times S_{N}}^{(\rho )}],$ there exists $\widetilde{\zeta }_{loc}$
such that a noisy state (\ref{18}) is $S_{1}\times \cdots \times S_{N}$%
-setting local. For this $\widetilde{\zeta }_{loc},$ relation (\ref{15_1})\
implies%
\begin{equation}
\mathrm{\Upsilon }_{S_{1}\times \cdots \times S_{N}}^{(\left( 1-\beta
\right) \widetilde{\zeta }_{loc}\text{ }+\beta \rho )}=1.  \label{26}
\end{equation}%
Also, $\mathrm{\Upsilon }_{S_{1}\times \cdots \times S_{N}}^{(\widetilde{%
\zeta }_{loc})}=1.$ Taking this into account in (\ref{25}), (\ref{26}) we
have 
\begin{eqnarray}
\beta \mathrm{\Upsilon }_{S_{1}\times \cdots \times S_{N}}^{(\rho )} &\leq &%
\mathrm{\Upsilon }_{S_{1}\times \cdots \times S_{N}}^{(\left( 1-\beta
\right) \widetilde{\zeta }_{loc}\text{ }+\beta \rho )}+(1-\beta )\mathrm{%
\Upsilon }_{S_{1}\times \cdots \times S_{N}}^{(\widetilde{\zeta }%
_{loc})}=2-\beta \text{\ \ \ \ }  \label{27} \\
&\Leftrightarrow &\text{ \ \ }\beta \leq \frac{2}{1+\mathrm{\Upsilon }%
_{S_{1}\times \cdots \times S_{N}}^{(\rho )}}  \notag
\end{eqnarray}%
for each $\beta \in \lbrack 0,\mathfrak{T}_{S_{1}\times \cdots \times
S_{N}}^{(\rho )}]$. Therefore,%
\begin{equation}
\mathfrak{T}_{S_{1}\times \cdots \times S_{N}}^{(\rho )}\leq \frac{2}{1+%
\mathrm{\Upsilon }_{S_{1}\times \cdots \times S_{N}}^{(\rho )}}.  \label{28}
\end{equation}

On the other hand, for each $\beta >\mathfrak{T}_{S_{1}\times \cdots \times
S_{N}}^{(\rho )},$ a noisy state (\ref{18}) is $S_{1}\times \cdots \times
S_{N}$-setting nonlocal for every $\zeta _{loc},$ so that by (\ref{15}) the
relation%
\begin{equation}
\mathrm{\Upsilon }_{S_{1}\times \cdots \times S_{N}}^{(\left( 1-\beta
\right) \zeta _{loc}\text{ }+\beta \rho )}>1  \label{29}
\end{equation}%
holds for all $\zeta _{loc}$ and each $\beta >\mathfrak{T}_{S_{1}\times
\cdots \times S_{N}}^{(\rho )}.$ In view of (\ref{11}) and linearity in $%
\rho $ of quantum probability distributions (\ref{10}), this, in turn,
implies that, for each $\zeta _{loc}$ and every $\beta >\mathfrak{T}%
_{S_{1}\times \cdots \times S_{N}}^{(\rho )},$ there exist (i) an $%
S_{1}\times \cdots \times S_{N}$-setting Bell inequality, specified in (\ref%
{7}) by some family $\widetilde{\Phi }_{S}=\Phi _{S}(\zeta _{loc},\beta )$
of functions in (\ref{4}) and (ii) quantum measurements, specified by some
family $\widetilde{\mathrm{M}}_{S}=\mathrm{M}_{S}(\zeta _{loc},\beta )$ of
POV measures, such that%
\begin{equation}
\left\vert (1-\beta )\mathcal{B}_{\widetilde{\Phi }_{S}}^{(\mathcal{E}%
_{\zeta _{loc},\widetilde{\mathrm{M}}_{S}})}+\beta \mathcal{B}_{\widetilde{%
\Phi }_{S}}^{(\mathcal{E}_{\rho ,\widetilde{\mathrm{M}}_{S}})}\right\vert >%
\mathcal{B}_{\widetilde{\Phi }_{S}}^{lhv}  \label{30}
\end{equation}%
for all $\zeta _{loc}$ and all $\beta \in (\mathfrak{T}_{S_{1}\times \cdots
\times S_{N}}^{(\rho _{d,N})},1].$ Varying (\ref{30}) in $\zeta ,$ implies
that, for each $x:=$ $\mathcal{B}_{\widetilde{\Phi }_{S}}^{(\mathcal{E}%
_{\zeta _{loc},\widetilde{\mathrm{M}}_{S}})}/\mathcal{B}_{\widetilde{\Phi }%
_{S}}^{lhv}\in \lbrack -1,1],$ there must exist $y:=\mathcal{B}_{\widetilde{%
\Phi }_{S}}^{(\mathcal{E}_{\rho ,\widetilde{\mathrm{M}}_{S}})}/\mathcal{B}_{%
\widetilde{\Phi }_{S}}^{lhv}\in $ $[-\mathrm{\Upsilon }_{S_{1}\times \cdots
\times S_{N}}^{(\rho )},\mathrm{\Upsilon }_{S_{1}\times \cdots \times
S_{N}}^{(\rho )}]$ such that 
\begin{equation}
\left\vert \text{ }(1-\beta )x+\beta y\right\vert >1  \label{31}
\end{equation}%
for all $\beta \in (\mathfrak{T}_{S_{1}\times \cdots \times S_{N}}^{(\rho
)},1].$ But this is possible if 
\begin{equation}
\frac{2-\beta }{\beta }<\mathrm{\Upsilon }_{S_{1}\times \cdots \times
S_{N}}^{(\rho )}\text{ \ \ \ }\Leftrightarrow \text{ \ \ \ }\beta >\frac{2}{%
1+\mathrm{\Upsilon }_{S_{1}\times \cdots \times S_{N}}^{(\rho )}}.
\label{32}
\end{equation}%
Together with condition that (\ref{31}) holds for all $\beta >\mathfrak{T}%
_{S_{1}\times \cdots \times S_{N}}^{(\rho )}$ and definition (\ref{21}) of
the tolerance $\mathfrak{T}_{S_{1}\times \cdots \times S_{N}}^{(\rho )}$,
Eq. (\ref{32}) implies%
\begin{equation}
\mathfrak{T}_{S_{1}\times \cdots \times S_{N}}^{(\rho )}\geq \frac{2}{1+%
\mathrm{\Upsilon }_{S_{1}\times \cdots \times S_{N}}^{(\rho )}}.  \label{33}
\end{equation}%
Inequalities (\ref{28}), (\ref{33}) prove relation (\ref{23}). Expression (%
\ref{24}) follows from (\ref{23}) and (\ref{12}).
\end{proof}

For a bipartite correlation scenario with two settings per site, expression (%
\ref{23}) for the $S_{1}\times \cdots \times S_{N}$-setting noise tolerance
of a nonlocal $N$-qudit quantum state agrees with minimizing over all
possible parties' measurements on a two-qudit state of the resistance of a
nonlocal family of scenario joint probability distributions to any local
noise -- the notion that was introduced in \cite{19}.

\section{General bounds}

Let us evaluate due to (\ref{23}), (\ref{24}) the noise tolerances for an
arbitrary nonlocal state $\rho _{d,N}$ on $(\mathbb{C}^{d})^{\otimes N}$.

In view of our results in \cite{20, 21, 22} we have the following general
precise upper bounds on the maximal violation $\mathrm{\Upsilon }_{S\times
\cdots \times S}^{(\rho _{d,N})}$ of all $S\times \cdots \times S$-setting
Bell inequalities by an arbitrary nonlocal $N$-qudit state $\rho _{d,N},$ $%
d\geq 2,$ $N\geq 2:$ 
\begin{eqnarray}
\mathrm{\Upsilon }_{2\times \cdots \times 2}^{(\rho _{d,N})} &\leq &\min {%
\big \{}d^{\frac{N-1}{2}},\text{ }3^{N-1}{\big \}},  \label{34} \\
\mathrm{\Upsilon }_{S\times \cdots \times S}^{(\rho _{d,N})} &\leq &\min {%
\big \{}d^{\frac{S(N-1)}{2}},\text{ }\left( 2\min \{d,S\}-1\right) ^{N-1}{%
\big \}},\text{ \ \ }S\geq 3,  \label{35}
\end{eqnarray}%
under projective quantum measurements at all sites and%
\begin{equation}
\mathrm{\Upsilon }_{S\times \cdots \times S}^{(\rho _{d,N})}\leq \left(
2\min \{d,S\}-1\right) ^{N-1},\text{ \ \ }S\geq 2,  \label{36}
\end{equation}%
under generalized quantum measurements at all sites.

The maximal violation $\mathrm{\Upsilon }_{\rho _{d,N}}$ by a nonlocal $N$%
-qudit state $\rho _{d,N}$ of \emph{all} general Bell inequalities satisfies
the relation \cite{22}%
\begin{equation}
\mathrm{\Upsilon }_{\rho _{d,N}}\leq \left( 2d-1\right) ^{N-1},\text{ \ \ }%
d\geq 2,N\geq 2  \label{37}
\end{equation}%
for either of above types of quantum measurements.

Taking these upper bounds into account in expressions (\ref{23}), (\ref{24}%
), we derive the following bounds on the $S\times \cdots \times S$-setting
noise tolerance $\mathfrak{T}_{S\times \cdots \times S}^{(\rho _{d,N})}$ of
an arbitrary nonlocal $N$-qudit state $\rho _{d,N},$ $d\geq 2,$ $N\geq 2:$ 
\begin{eqnarray}
\mathfrak{T}_{2\times \cdots \times 2}^{(\rho _{d,N})} &\geq &\frac{2}{%
1+\min {\big \{}d^{\frac{N-1}{2}},\text{ }3^{N-1}{\big \}}},  \label{38} \\
&&  \notag \\
\mathfrak{T}_{S\times \cdots \times S}^{(\rho _{d,N})} &\geq &\frac{2}{%
1+\min {\big \{}d^{\frac{S(N-1)}{2}},\left( 2\min \{d,S\}-1\right) ^{N-1}{%
\big \}}},\text{ \ }S\geq 3,  \notag
\end{eqnarray}%
under projective quantum measurements at all sites and 
\begin{equation}
\mathfrak{T}_{S\times \cdots \times S}^{(\rho _{d,N})}\geq \frac{2}{1+\left(
2\min \{d,S\}-1\right) ^{N-1}},\text{ \ \ }S\geq 2,  \label{41}
\end{equation}%
under generalized quantum measurements at all sites.

The overall noise tolerance $\mathfrak{T}_{\rho _{d,N}}$ of an arbitrary
nonlocal $N$-qudit state $\rho _{d,N}$ satisfies the relation 
\begin{equation}
\mathfrak{T}_{\rho _{d,N}}\geq \frac{2}{1+\left( 2d-1\right) ^{N-1}},\text{
\ \ \ }d\geq 2,\text{ }N\geq 2,  \label{42}
\end{equation}%
for either of above types of quantum measurements. For $d\rightarrow \infty
, $ this lower bound decreases to zero as $\frac{2}{\left( 2d\right) ^{N-1}}%
. $

From (\ref{41}) it follows that, for an arbitrary $S\times \cdots \times S$%
-setting nonlocal $N$-qudit state $\widetilde{\rho }_{d,N}$, the maximal
amount 
\begin{equation}
\mathfrak{M}_{S\times \cdots \times S}^{(\widetilde{\rho }_{d,N})}=1-%
\mathfrak{T}_{S\times \cdots \times S}^{(\widetilde{\rho }_{d,N})}
\label{42'}
\end{equation}%
of tolerable local noise is upper bounded by 
\begin{equation}
\mathfrak{M}_{S\times \cdots \times S}^{(\widetilde{\rho }_{d,N})}\leq \frac{%
\left( 2\min \{d,S\}-1\right) ^{N-1}-1}{\left( 2\min \{d,S\}-1\right)
^{N-1}+1},\text{ \ \ }d\geq 2,\text{ }N\geq 2,\text{ }S\geq 2,  \label{k3}
\end{equation}%
under all $S\times \cdots \times S$-setting quantum correlation scenarios.

For $d\leq S,$ this upper bound does not depend on a number $S$ of
measurement settings per site: 
\begin{equation}
\mathfrak{M}_{S\times \cdots \times S}^{(\widetilde{\rho }_{d,N})}\leq \frac{%
\left( 2d-1\right) ^{N-1}-1}{\left( 2d-1\right) ^{N-1}+1},\text{ \ \ }N\geq
2,\text{ \ }S\geq d\geq 2,  \label{k1}
\end{equation}%
while, for $d\geq S,$ it does not depend on a qudit dimension $d:$%
\begin{equation}
\mathfrak{M}_{S\times \cdots \times S}^{(\widetilde{\rho }_{d,N})}\leq \frac{%
\left( 2S-1\right) ^{N-1}-1}{\left( 2S-1\right) ^{N-1}+1},\text{ \ \ }N\geq
2,\text{ \ }d\geq S\geq 2.  \label{k2}
\end{equation}

For example, for two-qudit and three-qudit cases and two measurement
settings per site, the general bound (\ref{k2}) implies the following upper
bounds on the maximal tolerable noise: 
\begin{equation}
\mathfrak{M}_{2\times 2}^{(\widetilde{\rho }_{d,2})}\leq \frac{1}{2},\text{
\ \ \ }\mathfrak{M}_{2\times 2\times 2}^{(\widetilde{\rho }_{d,3})}\leq 
\frac{4}{5},  \label{k_3}
\end{equation}%
for all dimensions $d\geq 2.$

\section{N-qudit GHZ state}

Consider now bounds on the noise tolerances for the $N$-qudit
Greenberger-Horne-Zeilinger (GHZ)\ state 
\begin{equation}
GHZ_{d,N}=\frac{1}{d}\sum_{j,j_{1}}\left( |e_{j}\rangle \langle
e_{j_{1}}|\right) ^{\otimes N}  \label{y}
\end{equation}%
on $\left( \mathbb{C}^{d}\right) ^{\otimes N}.$ Here, $\left\{
e_{m},m=1,...,d\right\} $ is an orthonormal base in $\mathbb{C}^{d}.$

For this $N$-qudit quantum state, the maximal Bell violation $\mathrm{%
\Upsilon }_{S\times \cdots \times S}^{(GHZ_{d,N})}$ of all $S\times \cdots
\times S$-setting Bell inequalities admits the upper bounds \cite{20, 21, 22}
which are more specific than the general bounds (\ref{34})--(\ref{37}).
Namely, for all $d\geq 2,$ $N\geq 2:$%
\begin{eqnarray}
\mathrm{\Upsilon }_{2\times \cdots \times 2}^{(GHZ_{d,N})} &\leq &\min {\big
\{}d^{\frac{N-1}{2}},\text{ }3^{N-1},1+2^{N-1}(d-1){\big \}},  \label{x} \\
&&  \notag \\
\mathrm{\Upsilon }_{S\times \cdots \times S}^{(GHZ_{d,N})} &\leq &\min {\big
\{}d^{\frac{S(N-1)}{2}},(2S-1)^{N-1},1+2^{N-1}(d-1){\big \}},\text{ \ \ }%
S\geq 3,  \notag
\end{eqnarray}%
under projective quantum measurements and 
\begin{equation}
\mathrm{\Upsilon }_{S\times \cdots \times S}^{(GHZ_{d,N})}\leq \min {\big \{}%
(2S-1)^{N-1},1+2^{N-1}(d-1){\big \}},\text{ \ \ }S\geq 2,  \label{x3}
\end{equation}%
under generalized quantum measurements.

The maximal violation $\mathrm{\Upsilon }_{GHZ_{d,N}}$ by the GHZ state of
all general Bell inequalities satisfies the relation \cite{20}: 
\begin{equation}
\mathrm{\Upsilon }_{GHZ_{d,N}}\leq 1+2^{N-1}(d-1),\text{ \ \ }d\geq 2,\text{ 
}N\geq 2,  \label{44}
\end{equation}%
for either of above types of quantum measurements.

In view of (\ref{x})--(\ref{44}), for the $N$-qudit GHZ state, the general
bounds (\ref{38})--(\ref{k_3}) on the noise tolerances and the maximal
amount of tolerable noise can be improved.

Taking (\ref{x})--(\ref{44}) into account in (\ref{23}), (\ref{24}), we come
to the following bounds for the $S\times \cdots \times S$-setting noise
tolerance of the $N$-qudit GHZ state for all $d\geq 2,$ $N\geq 2:$%
\begin{eqnarray}
\mathfrak{T}_{2\times \cdots \times 2}^{(GHZ_{d,N})} &\geq &\frac{2}{1+\min {%
\big \{}d^{\frac{N-1}{2}},3^{N-1},1+2^{N-1}(d-1){\big \}}},  \label{43_1} \\
&&  \notag \\
\mathfrak{T}_{S\times \cdots \times S}^{(GHZ_{d,N})} &\geq &\frac{2}{1+\min {%
\big \{}d^{\frac{S(N-1)}{2}},\left( 2S-1\right) ^{N-1},1+2^{N-1}(d-1){\big \}%
}},\text{ \ \ }S\geq 3,  \notag
\end{eqnarray}%
under projective quantum measurements and 
\begin{equation}
\mathfrak{T}_{S\times \cdots \times S}^{(GHZ_{d,N})}\geq \frac{2}{1+\min %
\big \{\left( 2S-1\right) ^{N-1},1+2^{N-1}(d-1)\big\}},\text{ \ \ }S\geq 2,
\label{43_4}
\end{equation}%
under generalized quantum measurements.

The overall noise tolerance of the $N$-qudit state GHZ state satisfies the
relation 
\begin{equation}
\mathfrak{T}_{GHZ_{d,N}}\geq \frac{1}{1+2^{N-2}(d-1)},\text{ \ }d\geq 2,%
\text{ }N\geq 2,\text{\ }  \label{45}
\end{equation}%
for either of above types of quantum measurements.

From (\ref{43_1}) it follows that, under $2\times \cdots \times 2$-setting
correlation scenarios\emph{\ with projective measurements} at all sites, the
maximal amount (\ref{42'}) of a local noise of any type tolerable by the $N$%
-qudit GHZ state is upper bounded by%
\begin{equation}
\mathfrak{W}_{2\times \cdots \times 2}^{(GHZ_{d,N})}\leq \min \big \{\frac{%
d^{\frac{N-1}{2}}-1}{d^{\frac{N-1}{2}}+1},\frac{3^{N-1}-1}{3^{N-1}+1},\frac{%
2^{N-2}(d-1)}{2^{N-2}(d-1)+1}\big\}  \label{45'}
\end{equation}%
for all $d\geq 2,N\geq 2.$

For example, for the three-qutrit GHZ state ($N=3,d=3$)$,$ the maximal
amount of tolerable local noise $\mathfrak{W}_{2\times 2\times
2}^{(GHZ_{3,3})}\leq \frac{2}{3}.$

By (\ref{45}), under all quantum correlation scenarios, the maximal amount (%
\ref{22'}) of any local noise tolerable by the $N$-qudit GHZ state satisfies
the relation%
\begin{equation}
\mathfrak{W}_{GHZ_{d,N}}\leq \frac{2^{N-2}(d-1)}{2^{N-2}(d-1)+1},\text{ \ \ }%
d\geq 2,\text{ }N\geq 2.
\end{equation}%
For the two-qudit GHZ state ($N=2$), this general bound gives%
\begin{equation}
\mathfrak{W}_{GHZ_{d,2}}\leq \frac{d-1}{d},\text{ \ \ }d\geq 2.  \label{45__}
\end{equation}

\subsection{N-qubit case}

Let us evaluate the noise tolerances for the $N$-qubit GHZ state\emph{.}

As it has been proven in \cite{21}, under projective measurements at all
sites, the maximal violation $\mathrm{\Upsilon }_{2\times \cdots \times
2}^{(GHZ_{2,N})}$ by the $N$-qibit state $GHZ_{2,N}$ of \emph{all general} $%
2\times \cdots \times 2$\emph{-}setting Bell inequalities coincides with the
maximal violation $2^{\frac{N-1}{2}}$ by this state of all \emph{correlation}
$2\times \cdots \times 2$-setting\emph{\ }Bell inequalities and is,
therefore, given by 
\begin{equation}
\mathrm{\Upsilon }_{2\times \cdots \times 2}^{(GHZ_{2,N})}\mid
_{proj.meas}=2^{\frac{N-1}{2}}.  \label{46}
\end{equation}

By (\ref{23}) this implies that, under $2\times \cdots \times 2$-setting
quantum correlation scenarios \emph{with projective quantum measurements} at
all sites, the $2\times \cdots \times 2$-setting noise tolerance of the $N$%
-qubit GHZ\ state is given by%
\begin{equation}
\mathfrak{T}_{2\times \cdots \times 2}^{(GHZ_{2,N})}\mid _{proj.meas}=\frac{2%
}{1+2^{\frac{N-1}{2}}},\text{ \ \ }N\geq 2.  \label{47}
\end{equation}%
while the maximal amount (\ref{45'}) of any local noise tolerable by the $N$%
-qubit GHZ state is 
\begin{equation}
\mathfrak{W}_{2\times \cdots \times 2}^{(GHZ_{2,N})}\mid _{proj.meas}=\frac{%
2^{\frac{N-1}{2}}-1}{2^{\frac{N-1}{2}}+1},\text{ \ \ }N\geq 2.
\end{equation}%
For $N=2,3,$ this bound implies%
\begin{equation}
\mathfrak{W}_{2\times \cdots \times 2}^{(GHZ_{2,2})}\mid _{proj.meas}=\frac{%
\sqrt{2}-1}{\sqrt{2}+1},\text{ \ \ \ \ }\mathfrak{W}_{2\times \cdots \times
2}^{(GHZ_{2,3})}\mid _{proj.meas}=\frac{1}{3}.
\end{equation}

Furthermore, in view of (\ref{12}), (\ref{44}) and (\ref{46}), the maximal
violation $\mathrm{\Upsilon }_{GHZ_{2,N}}$ by the $N$-qubit GHZ state of 
\emph{all general} Bell inequalities satisfies the relations 
\begin{equation}
2^{\frac{N-1}{2}}\leq \mathrm{\Upsilon }_{GHZ_{2,N}}\leq 1+2^{N-1}
\label{48}
\end{equation}%
under all generalized parties' quantum measurements.

Therefore, by (\ref{24}) the overall noise tolerance $\mathfrak{T}%
_{GHZ_{2,N}}$ of the $N$-qubit GHZ\ state admits the bounds 
\begin{equation}
\frac{1}{1+2^{N-2}}\leq \mathfrak{T}_{GHZ_{2,N}}\leq \frac{2}{1+2^{\frac{N-1%
}{2}}},\text{ \ \ }N\geq 2,  \label{49}
\end{equation}%
where the lower and upper bounds decrease with increasing $N.$ This, in
particular, implies that, under all quantum correlation scenarios, a mixture
of the $N$-qubit state $GHZ_{2,N}$ \emph{with any local noise} is nonlocal
for all 
\begin{equation}
\beta >\frac{2}{1+2^{\frac{N-1}{2}}}\underset{N>>1}{\simeq }2^{-\frac{N-3}{2}%
}.  \label{as1}
\end{equation}%
For comparison: a mixture of the $N$-qubit GHZ state \emph{with white noise}
is nonlocal \cite{15} for all $\beta >2^{-\frac{N-1}{2}}.$

Due to (\ref{49}), we derive the following bounds for the maximal amount $%
\mathfrak{M}_{GHZ_{2,N}}$ of a local noise of any type tolerable by the the $%
N$-qubit GHZ state:%
\begin{equation}
\frac{2^{\frac{N-1}{2}}-1}{2^{\frac{N-1}{2}}+1}\leq \mathfrak{M}%
_{GHZ_{2,N}}\leq \frac{2^{N-2}}{1+2^{N-2}},\text{ \ \ \ }N\geq 2.
\label{49'}
\end{equation}

Hence, \emph{with increasing of a number }$N$ \emph{of qubits, the
robustness of nonlocality of the }$N$\emph{-qubit GHZ state to any local
noise increases. }

\section{N-qubit Dicke states}

In this section, we evaluate the overall noise tolerance for the $N$-qubit
Dicke states $|D_{N}^{(k)}\rangle \langle D_{N}^{(k)}|$ on $\left( \mathbb{C}%
^{2}\right) ^{\otimes N}$ with $k=1,...,N-1$ excitations: 
\begin{equation}
|D_{N}^{(k)}\rangle =\frac{1}{\sqrt{\binom{N}{k}}}\sum_{j}\pi _{j}(|0\rangle
^{\otimes (N-k)}\otimes |1\rangle ^{\otimes k}).  \label{51}
\end{equation}%
Here, $\{|0\rangle ,|1\rangle \}$ is a orthonormal base in $\mathbb{C}^{2},$
notation $\pi _{j}$ means a permutation in the tensor product of $k$ vectors 
$|1\rangle $ and $(N-k)$ vectors $|0\rangle $ and the binomial coefficient $%
\binom{N}{k}$ gives the number of such permutations.

For example, for $k=1$, the Dicke state $|D_{N}^{(1)}\rangle $ constitutes
the $N$-qubit $W$ state 
\begin{eqnarray}
|W_{N}\rangle &=&\frac{1}{\sqrt{N}}(\underset{N}{\underbrace{|0\rangle
\otimes \cdots \otimes |0\rangle \otimes |1\rangle }}+\underset{N}{%
\underbrace{|0\rangle \otimes \cdots \otimes |1\rangle \otimes |0\rangle }}
\label{52} \\
&&+...+\underset{N}{\underbrace{|1\rangle \otimes \cdots \otimes |0\rangle
\otimes |0\rangle }}).  \notag
\end{eqnarray}%
For $N=3,$ $k=2$, the three-qubit Dicke state with two excitations has the
form%
\begin{equation}
|D_{3}^{(2)}\rangle =\frac{1}{\sqrt{3}}(|0\rangle \otimes |1\rangle \otimes
|1\rangle +|1\rangle \otimes |0\rangle \otimes |1\rangle +|1\rangle \otimes
|1\rangle \otimes |0\rangle ).  \label{53}
\end{equation}

Due to the general upper bound (\ref{37}) and the value of violation by the
Dicke state $|D_{N}^{(k)}\rangle $ of the specific Bell inequality
introduced in \cite{15}, the maximal violation (\ref{12}) by the Dicke state 
$|D_{N}^{(k)}\rangle $ of \emph{all} general Bell inequalities satisfies the
relation 
\begin{equation}
1+\frac{2^{N-1}(\sqrt{2}-1)}{\binom{N}{k}}\leq \mathrm{\Upsilon }%
_{D_{N}^{(k)}}\leq 3^{N-1},\text{ \ \ }N\geq 2,\text{ }k=1,...,N-1.
\label{54}
\end{equation}%
From (\ref{24}), (\ref{54}) it follows that, for the Dicke state $%
|D_{N}^{(k)}\rangle ,$ the overall noise tolerance $\mathfrak{T}%
_{D_{N}^{(k)}}$ admits the bounds 
\begin{equation}
\frac{2}{1+3^{N-1}}\leq \mathfrak{T}_{D_{N}^{(k)}}\leq \frac{1}{1+\frac{%
2^{N-2}(\sqrt{2}-1)}{\binom{N}{k}}}  \label{55}
\end{equation}%
for all $N\geq 2,$ $k=1,2,...,N-1.$

Therefore, under all quantum correlation scenarios, a mixture (\ref{18}) of
the $N$-qubit Dicke state $|D_{N}^{(k)}\rangle $ with any local noise is
nonlocal for all 
\begin{equation}
\beta >\frac{1}{1+\frac{2^{N-2}(\sqrt{2}-1)}{\binom{N}{k}}}.  \label{56}
\end{equation}%
For a large even $N>>1$ and $k=\frac{N}{2},$ the binomial coefficient $%
\binom{N}{N/2}\underset{N>>1}{\simeq }\frac{2^{N}\sqrt{2}}{\sqrt{\pi N}},$
so that by (\ref{56}) a mixture of the Dicke state $|D_{N}^{(\frac{N}{2}%
)}\rangle $ with any local noise is nonlocal for all%
\begin{equation}
\beta >\frac{1}{1+\frac{2^{N-2}(\sqrt{2}-1)}{\binom{N}{N/2}}}\underset{N>>1}{%
\simeq }\frac{4\sqrt{2}}{\left( \sqrt{2}-1\right) }\frac{1}{\sqrt{\pi N}}.
\label{57}
\end{equation}

Specifying (\ref{55}) for $k=1,$ we have the following lower and upper
bounds for the noise tolerance $\mathfrak{T}_{W_{N}}$ of the $N$-qubit $W$
state (\ref{52}): 
\begin{equation}
\frac{2}{1+3^{N-1}}\leq \mathfrak{T}_{W_{N}}\leq \frac{N}{N+2^{N-2}(\sqrt{2}%
-1)},\text{ \ \ }N\geq 2.  \label{55'}
\end{equation}%
Hence, a mixture (\ref{18}) of the $N$-qubit $W$\ state with any local noise
is nonlocal for all 
\begin{equation}
\beta >\frac{N}{N+2^{N-2}(\sqrt{2}-1)},\text{ \ \ \ }N\geq 2.  \label{56'}
\end{equation}

\section{Conclusions}

In this article, we have presented a new general framework for quantifying
noise tolerances of a nonlocal $N$-partite quantum state under different
classes of quantum correlation scenarios with arbitrary numbers of settings
and outcomes at each site.

This allowed us (i) to consistently specify two types (\ref{20}), (\ref{22})
of noise tolerances of a nonlocal $N$-partite qudit state; (ii) to express
them due to (\ref{23}), (\ref{24}) in terms of the maximal violations (\ref%
{11}), (\ref{12}) by this state of two classes of general Bell inequalities
and (iii) to derive further the following new precise lower and upper bounds
on the\ noise tolerances and the maximal amounts of tolerable local noise:

\begin{itemize}
\item bounds (\ref{38})--(\ref{k_3}) -- for an arbitrary nonlocal $N$-qudit
state;

\item bounds (\ref{43_1})--(\ref{45__}), (\ref{47}), (\ref{49})--(\ref{49'})
-- for the $N$-qudit GHZ state, in particular, the $N$-qubit GHZ state;

\item bounds (\ref{55})--(\ref{56'}) -- for the $N$-qubit Dicke states and
the $N$-qubit $W$ state.
\end{itemize}

\noindent\ and to analyze their asymptotics for large $N$ and $d$. We, in
particular, prove that with increasing of a number $N$ of qubits, the
robustness of nonlocality of the $N$-qubit GHZ state to any local noise
increases.\smallskip

To our knowledge, no one of these analytical bounds has been reported in the
literature.

\begin{acknowledgement}
Valuable discussions with Professor Khrennikov are very much appreciated.
The publication was prepared within the framework of the Academic Fund
Program at the National Research University Higher School of Economics (HSE)
in 2018-2019 (grant N 18-01-0064) and by the Russian Academic Excellence
Project "5-100".
\end{acknowledgement}

\end{document}